\documentclass[pdflatex,sn-mathphys-num]{sn-jnl}% Math and Physical Sciences Numbered Reference Style
\usepackage{graphicx}%
\usepackage{multirow}%
\usepackage{amsmath,amssymb,amsfonts}%
\usepackage{amsthm}%
\usepackage{mathtools}%
\usepackage{mathrsfs}%
\usepackage[title]{appendix}%
\usepackage{xcolor}%
\usepackage{xpatch}%
\usepackage{textcomp}%
\usepackage{manyfoot}%
\usepackage{booktabs}%
\usepackage{algorithm}%
\usepackage{algorithmicx}%
\usepackage{algpseudocode}%
\usepackage{listings}%
\theoremstyle{thmstyleone}%
\newtheorem{theorem}{Theorem}%  meant for continuous numbers
\newtheorem{proposition}[theorem]{Proposition}%
\newtheorem{lemma}[theorem]{Lemma}% 

\theoremstyle{thmstyletwo}%

\theoremstyle{thmstylethree}%

\newcommand{\RR}{\mathbb{R}}

\newcommand{\ZZ}{\mathbb{Z}}
\newcommand{\NN}{\mathbb{N}}
\newcommand{\xdif}{\,\text{d}}
\DeclareMathOperator*{\Res}{Res}
\DeclarePairedDelimiter\floor{\lfloor}{\rfloor}

\makeatletter
\newcounter{proofpart}
\xpretocmd{\proof}{\setcounter{proofpart}{0}}{}{}
\newcommand{\proofpart}[1]{%
  \par
  \addvspace{\medskipamount}%
  \stepcounter{proofpart}%
  \noindent\emph{Part \theproofpart: #1}\par\nobreak\smallskip
  \@afterheading
}
\makeatother

\begin{document}

\title{Explicit Solutions in a Probabilistic Cellular Automaton Model for Self-Assembled Monolayer Degradation}

%%=============================================================%%
%% GivenName	-> \fnm{Joergen W.}
%% Particle	-> \spfx{van der} -> surname prefix
%% FamilyName	-> \sur{Ploeg}
%% Suffix	-> \sfx{IV}
%% \author*[1,2]{\fnm{Joergen W.} \spfx{van der} \sur{Ploeg} 
%%  \sfx{IV}}\email{iauthor@gmail.com}
%%=============================================================%%

\author[1,2]{\fnm{Wes} \sur{Caldwell}}%\email{wcaldwe6@jh.edu}
\author[2]{\fnm{Ryan} \sur{Evans}}%\email{ryan.evans@nist.gov}
\author[1,2]{\fnm{Jacob} \sur{Majikes}}%\email{jacob.majikes@nist.gov}
\author*[2]{\fnm{Anthony} \sur{Kearsley}}\email{anthony.kearsley@nist.gov}
\affil[1]{Whiting School of Engineering, Johns Hopkins University, Baltimore, MD}
\affil*[2]{National Institute of Standards and Technology, Gaithersburg, MD}

%%==================================%%
%% Sample for unstructured abstract %%
%%==================================%%

\abstract{
Self-assembled monolayers (SAMs), formed by the adsorption of a monomolecular layer of adsorbate onto a surface, have emerged as leading candidates for platforms for electrochemical biosensors.
In order to characterize the working age of a SAM-based biosensor, it is necessary to understand the long-term degradation of the monolayer, with particular emphasis on the intermolecular forces present in the self-assembly process.
In this work, discrete- and continuous-time probabilistic cellular automaton (PCA) models for the degradation of SAMs involving adsorbate-adsorbate interactions are proposed in which adsorbate molecules desorb at a rate dependent only on the number of adsorbed neighbors.
Importantly, we make no assumptions on the relationship between degradation rate and adsorbed neighbors.
Using techniques from generating function theory and complex analysis, the expected degradation curve is solved for explicitly in one dimension and shown to be a sigmoid curve related to the Gompertz function.
}

\keywords{Probabilistic Cellular Automata, Self-Assembled Monolayers, Degradation Modeling, Generating Functions}

%%\pacs[JEL Classification]{D8, H51}

%%\pacs[MSC Classification]{35A01, 65L10, 65L12, 65L20, 65L70}

\maketitle

\section{Introduction}

In the 21st century, self-assembled monolayers (SAMs) have been shown to be a fundamental tool in the construction and optimization of modern biosensors \cite{chakiSelfassembledMonolayersTunable2002, evansTransportPhenomenaBiological2020, evansDiffusionlimitedReactionsNanoscale2020, kimSelfassemblingBiomoleculesBiosensor2023}.
As a result, it is important to develop models to analyze the properties of SAMs over their functional lifetimes.
Monolayers are formed when molecules adsorb to the surface, and they degrade when the molecules desorb.
Traditional approaches to study the formation of monolayers use the Langmuir adsorption model \cite{langmuirConstitutionFundamentalProperties1917} to determine coverage, but this model assumes there are no local interactions between adsorbed molecules, so cannot say anything about the geometric structure of the formed monolayer.
Similarly, the most-used model for desorption dynamics, the Polanyi-Wigner equation, does not take into account adsorbate-adsorbate interactions \cite{ligterinkOverviewDesorptionParameters2023}.
Therefore, there is a need to develop models for monolayer dynamics that incorporates the geometry of the monolayer itself.

To this end, we propose a model for monolayer desorption based on probabilistic cellular automata (PCAs) that includes local interactions within the monolayer.
Cellular automata have a long history in the study of pattern formation in mathematical biology \cite{deutschCellularAutomatonModeling2017} dating back to the pioneering work of von Neumann in the 1960s \cite{vonNeumannTheorySelfreproducingAutomata1969}.
Cellular automata have the advantage of being very simple to define and very easy to simulate at the microscale while also being able to generate a wide range of morphogenetic phenomena at the macroscale.
We choose a PCA approach for several reasons: first, the underlying substrate, typically either gold or a semiconductor, to which the monolayer adsorbs, comes naturally equipped with a lattice structure of available sites, so a discrete lattice-based approach is natural.
Second, the requirement that each site acts independently with dynamics depending only on its neighbors is exactly the fundamental modeling assumption of PCA.
We may also interpret our PCA model as a Markov chain for each site with transition probabilities dependent on the state of each site's neighbors.

In one dimension, we may solve explicitly for the expected degraded area over time in terms of the parameters of the model (Theorems \ref{thm:main} and \ref{thm:general}).
On the other hand, in two dimensions, our methods cannot produce an explicit formula.
Our approach centers around the determination of the fundamental recurrence relation (\ref{eq:recurrence}), which in turn relies fundamentally on symmetries only present in the one-dimensional lattice: namely, that there is only one unique cluster of a given size.
One could in theory construct a similar recurrence relation over the explicit clusters in the two dimensional lattice, but this approach would be inherently constrained to the ability to enumerate clusters, which is a famously difficult open problem \cite{golombPolyominoesPuzzlesPatterns1994, grimmettPercolation1999, redelmeierCountingPolyominoesAnother1981}.
Therefore, for this work, we restrict our search for explicit solutions to the one dimensional case.

\begin{figure}
    \centering
    \includegraphics[width=0.75\linewidth]{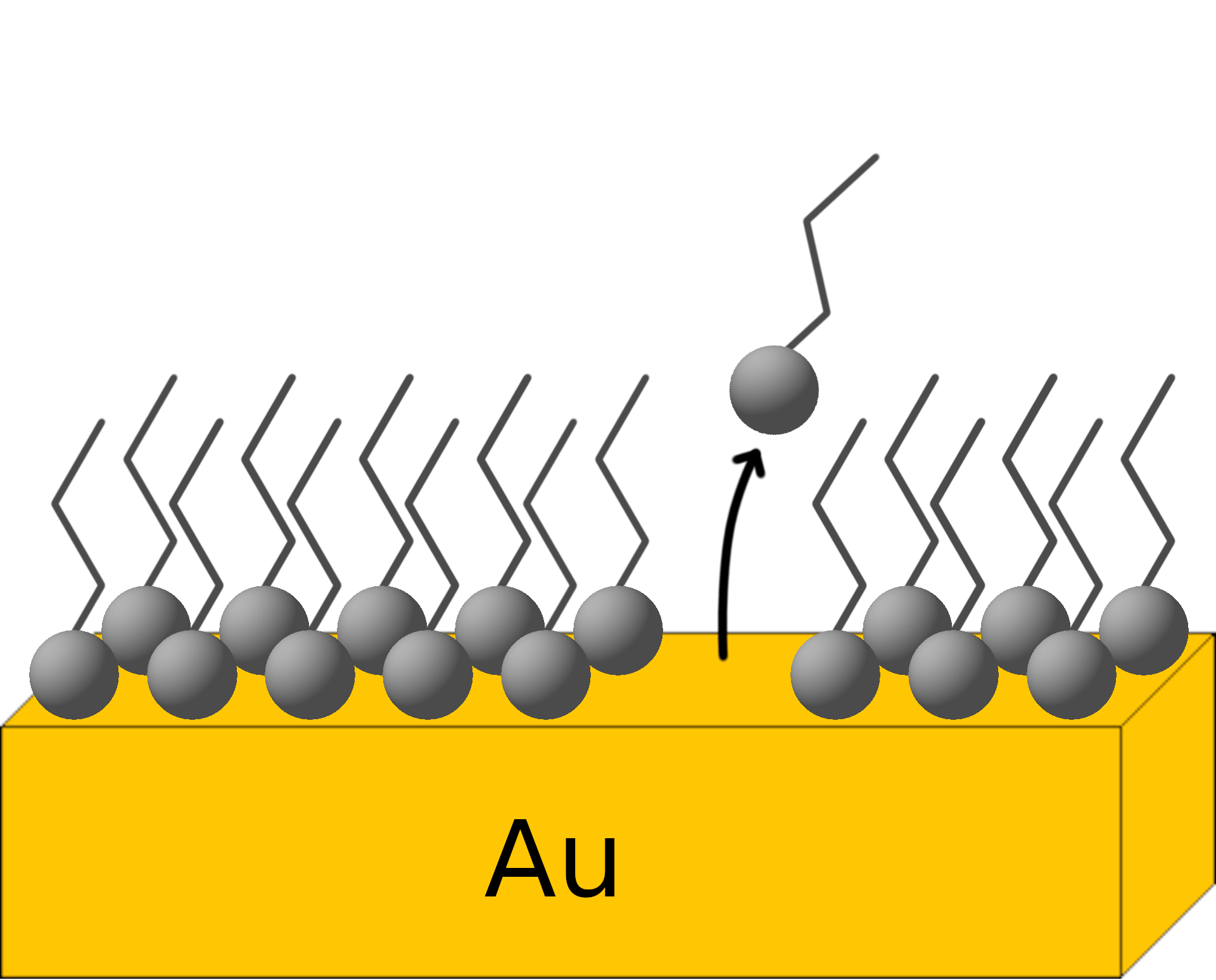}
    \caption{Desorption of SAM molecules from the substrate lattice.}
    \label{fig:desorption}
\end{figure}

\section{Degradation PCA}\label{sec:model}

\subsection{Model Definition}
We model the monolayer as the infinite one- or two-dimensional lattice $\ZZ^N$, for $N=1,2$, respectively, where each site either contains an adsorbate molecule, in which case we say the site is healthy, or does not contain an adsorbate molecule, in which case we say the site is degraded.
Let $s_i(t)$ denote the state of site $i \in \ZZ^N$ at time $t \in \NN$, where $s_i(t) = 0$ means that site $i$ is healthy (not degraded) at time $t$ and $s_i(t) = 1$ means that site $i$ is degraded at time $t$.
At time $t=0$, we suppose the monolayer is healthy, so that $s_i(0)=0$ for all $i \in \ZZ^N$.

Our model obeys the following two rules:
\begin{enumerate}
    \item We consider only possible desorption of molecules, so once a state becomes degraded it cannot become healthy again;
    \item When we step from time $t$ to time $t+1$, each healthy site degrades simultaneously with probability $p_n$, where $n$ is the number of degraded neighbors of the site.
\end{enumerate}
In the one-dimensional model, we must therefore specify the probabilities $p_0, p_1, p_2$, whereas in the two-dimensional model, we must specify the probabilities $p_0$ through $p_4$.
It is not required for the results in this paper, but in practice we would expect these probabilities to be increasing, since degradations are more likely the more degraded neighbors a site has \cite{quinonesIsothermModelsLocalized1996}.

From a biochemical standpoint, Rule 1 states that there is no readsorption from solution.
In practice this will never be exactly the case, but if the concentration of adsorbate remains sufficiently low throughout the degradation process, then Rule 1 is an appropriate idealization.
On the other hand, Rule 2 states that we may view the state of each site as a Markov chain with transition probability $0 \to 1$ depending on the state of the neighboring sites.
We may also view Rule 2 as a discretization of an underlying continuous process (see Section \ref{sec:cont}).

This model resembles a discrete version of the contact process from the study of interacting particle systems \cite{harrisContactInteractionsLattice1974, liggettInteractingParticleSystems2005}, as well as cellular automaton models for the diffusion of innovations \cite{boccaraModelingDiffusionInnovations1999, fuksExampleComputationDensity2018}.
From now on, we will consider only the one-dimensional model.

\begin{figure}
    \centering
    \includegraphics[width=0.4\linewidth]{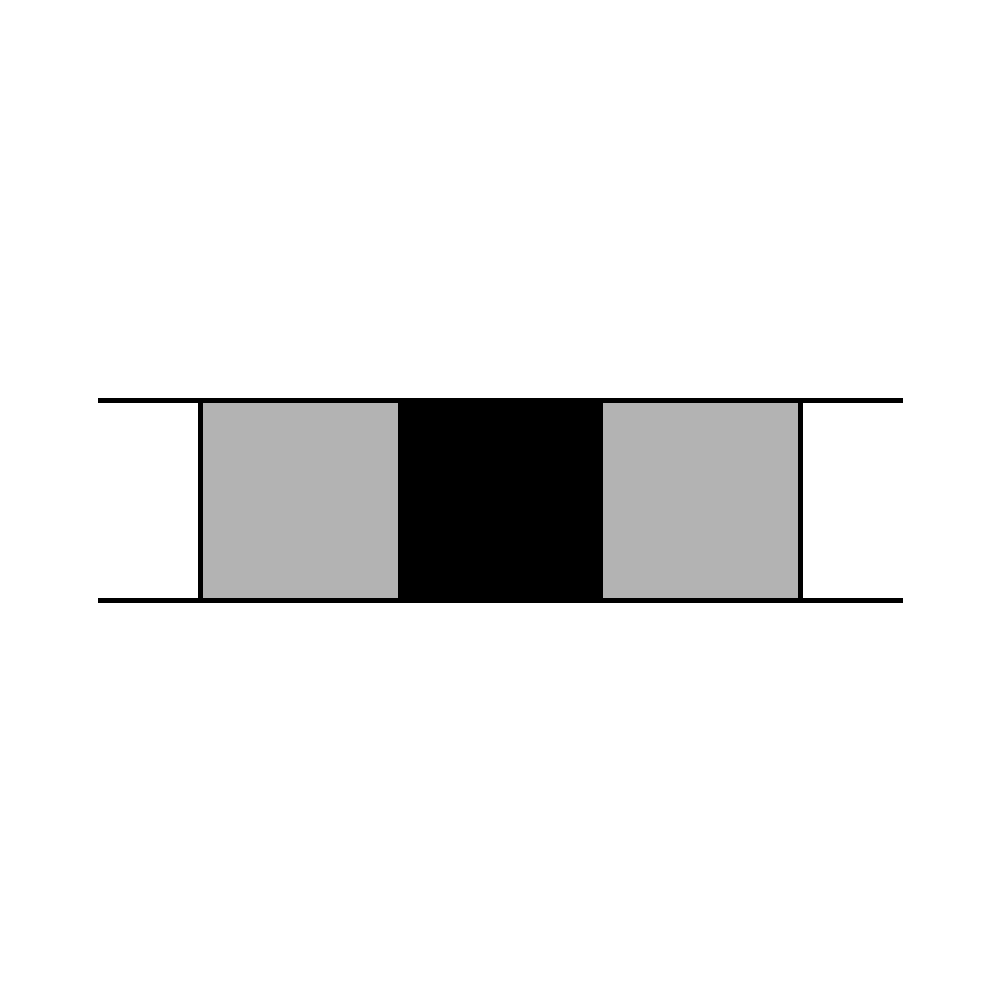}
    \includegraphics[width=0.4\linewidth]{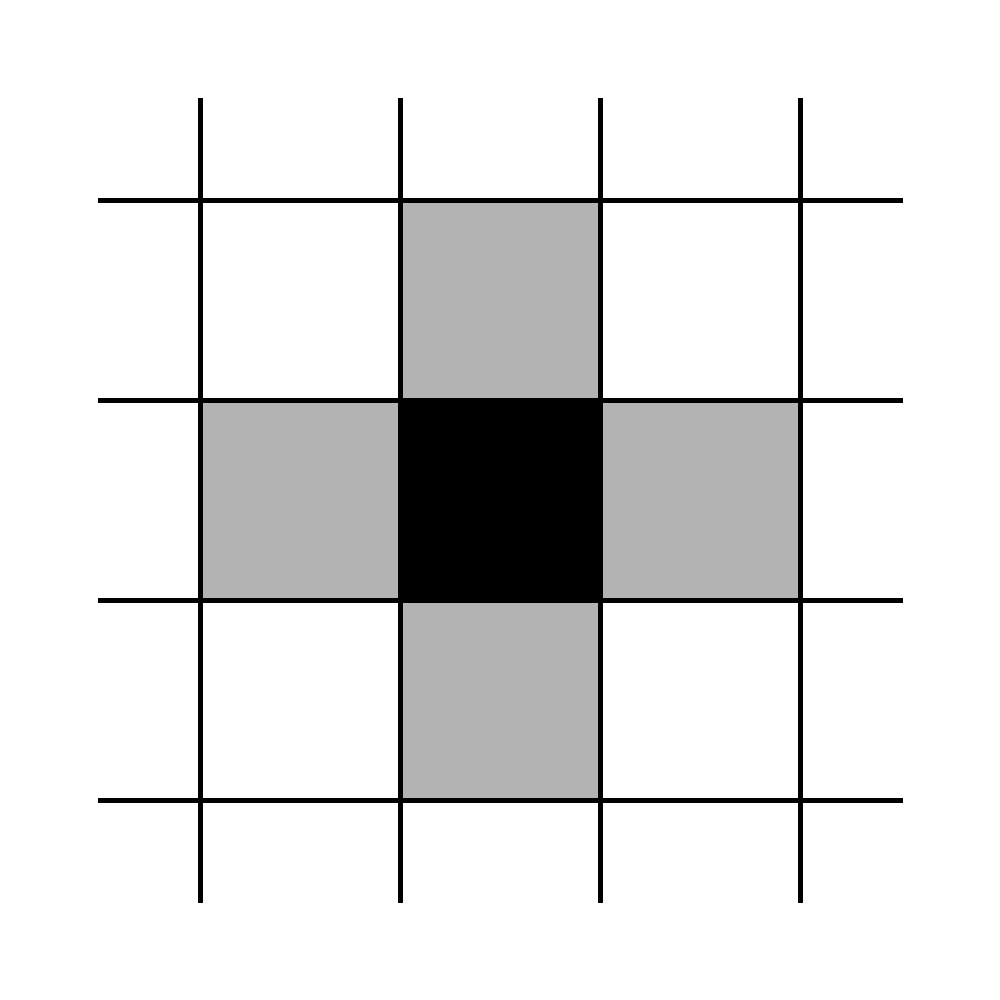}
    \caption{Neighborhoods (marked gray) of a site (black) in 1D and 2D.}
    \label{fig:nbhds}
\end{figure}

\subsection{Local Transition Probabilities}

We may summarize the rules above by specifying the conditional transition probabilities $w(s_i(t+1)|s_{i-1}(t),s_i(t),s_{i+1}(t))$ that the site $i$ changes from state $s_i(t)$ to $s_i(t+1)$ given that its neighbors are in states $s_{i-1}(t),s_{i+1}(t)$.
Since our states are only allowed to be $0$ or $1$, we must have $w(0|\mathbf{a}) = 1 - w(1|\mathbf{a})$ for any $\mathbf{a} \in \{0,1\}^3$, so it suffices to specify $w(1|\mathbf{a})$ for all $\mathbf{a} \in \{0,1\}^3$.
For the model described above, we find
\begin{equation}\label{eq:model}
    \begin{split}
        &w(1|000) = p_0, \quad w(1|001) = w(1|100) = p_1, \quad w(1|101) = p_2, \\
        &w(1|x1y) = 1 \text{ for all } x,y \in \{0,1\}.
    \end{split}
\end{equation}

We are interested in solving for the expected healthy site density of the monolayer for all times $t > 0$.
Equivalently, we would like to compute $P_t(0)$, the probability that an arbitrarily chosen site at time $t$ is 0 (healthy).
To do this, we must apply the local transition probabilities given by Eq. (\ref{eq:model}) to arbitrary probability measures on the space of states (for a general theory, see \cite{fuksConstructionLocalStructure2013}; for a worked example, see \cite{fuksExampleComputationDensity2018}).
In particular, the local transition probabilities given by Eq. (\ref{eq:model}) generalize the symmetric case considered by Fuk{\'s} \cite{fuksExampleComputationDensity2018} by introducing a non-zero ``spontaneous nucleation probability'' $p_0$, which substantially complicates the resulting analysis.
The proof of Theorem \ref{thm:main} is in turn inspired by the solution of the one-dimensional diffusion PCA by Fuk{\'s} \cite{fuksExampleComputationDensity2018}.

Define a \emph{block} $\mathbf{a}$ of length $n$ to be an element of $\{0,1\}^n$; we identify blocks with states of finite connected components of sites.
Furthermore, let $|\mathbf{a}|$ be the length of the block $\mathbf{a}$.
Since our model is translation-invariant, we need not specify where a block is placed, since all translates of a block have equal probability of occurrence.

Let $P_t(\mathbf{a})$ be the probability that an arbitrary block of length $|\mathbf{a}|$ at time $t$ is equal to $\mathbf{a}$.
Then, since the transition probabilities only depend on the neighborhood of a site, we may write the probability of a block of length $n$ at time $t+1$ in terms of all of the blocks of length $n+2$ at time $t$; in particular, we have
\[
    P_{t+1}(\mathbf{a}) = \sum_{\mathbf{b} \in \{0, 1\}^{|\mathbf{a}|+2}} w(\mathbf{a}|\mathbf{b}) P_t(\mathbf{b}),
\]
where
\[
    w(\mathbf{a}|\mathbf{b}) := \prod_{i=1}^{|\mathbf{a}|} w(a_i|b_ib_{i+1}b_{i+2}),
\]
is the probability that a fixed block transitions to $\mathbf{a}$ given that its current neighborhood is $\mathbf{b}$.

For example, using the transition probabilities given by Equation (\ref{eq:model}), we may compute
\begin{equation}\label{eq:n1unsimp}
    P_{t+1}(0)
        = (1 - p_0)P_t(0^3) + (1 - p_1)P_t(0^21) + (1 - p_1)P_t(10^2) + (1 - p_2)P_t(101),
\end{equation}
where superscripts indicate repetition; that is, $0^21$ indicates the block $001$.
However, not all the probabilities on the right hand side are independent.
In fact, for all $\mathbf{a}$ we have the \emph{consistency conditions} (c.f. \cite[Proposition~2.3]{fuksConstructionLocalStructure2013}),
\[
    P_t(\mathbf{a})
        = P_t(\mathbf{a}0) + P_t(\mathbf{a}1)
        = P_t(0\mathbf{a}) + P_t(1\mathbf{a}).
\]
These imply that, for all $n \geq 0$,
\begin{equation}\label{eq:consistency}
    \begin{split}
        P_t(0^{n+1}1)
            &= P_t(10^{n+1})
             = P_t(0^{n+1}) - P_t(0^{n+2}), \\
        P_t(10^{n+1}1)
            &= P_t(10^{n+1}) - P_t(10^{n+2})
             = P_t(0^{n+1}) - 2P_t(0^{n+2}) + P_t(0^{n+3}).        
    \end{split}
\end{equation}
Therefore, we may write the right hand side of Eq. (\ref{eq:n1unsimp}) in terms of $P_t(0^{n+1})$ only, for $n=0,1,2$, so it is natural to study the recurrence relations of $P_{t+1}(0^{n+1})$ for $n > 0$ as well.
In fact, we find
\begin{equation}\label{eq:ng1unsimp}
    \begin{split}
        P_{t+1}(0^{n+1}) = (1-p_0)^{n+1}&P_t(0^{n+3}) + (1-p_1)(1-p_0)^n\left(P_t(0^{n+2}1) + P_t(10^{n+2})\right) \\
        &+ (1-p_1)^2(1-p_0)^{n-1}P_t(10^{n+1}1).
    \end{split}
\end{equation}

Applying the consistency conditions (\ref{eq:consistency}) to Eqs. (\ref{eq:n1unsimp}) \& (\ref{eq:ng1unsimp}) leads us to the following:
\begin{proposition}\label{prop:recurrence}
    Let $P_t(0^{n+1})$ denote the probability that an arbitrarily chosen cluster of $n+1$ sites are all healthy at time $t$ in the probabilistic cellular automaton with transition probabilities given by Eq. (\ref{eq:model}).
    Then we may compute $P_{t+1}(0^{n+1})$ via the recurrence relations
    \begin{equation}\label{eq:recurrence}
        \begin{split}
            P_{t+1}(0)
            &= (1 - p_2)P_t(0) + 2(p_2 - p_1)P_t(0^2) - (p_0 - 2p_1 + p_2)P_t(0^3), \\
            P_{t+1}(0^{n+1})
            &= (1 - p_0)^{n-1}\left((1 - p_1)^2P_t(0^{n+1})\right. \\
            &\left.\qquad+ 2(1 - p_1)(p_1 - p_0)P_t(0^{n+2}) + (p_1 - p_0)^2P_t(0^{n+3})\right),
        \end{split}
    \end{equation}
    for $n > 0$.
\end{proposition}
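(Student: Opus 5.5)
The plan is a direct substitution: start from Eqs.~(\ref{eq:n1unsimp}) and (\ref{eq:ng1unsimp}) and use the consistency relations (\ref{eq:consistency}) to rewrite every block containing a $1$ in terms of the pure blocks $P_t(0^k)$. Both equations come from the single observation that a block $0^{n+1}$ at time $t+1$ forces the same sites to be healthy at time $t$ (degraded sites stay degraded). So only neighborhoods $\mathbf{b}=x0^{n+1}y$ with $x,y\in\{0,1\}$ contribute. Each interior site then has two healthy neighbors and survives with probability $1-p_0$, while each end site survives with probability $1-p_0$ or $1-p_1$ according to $x$ or $y$. I will take these two equations as given and only carry out the algebra.

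For $n=0$, I substitute $P_t(0^21)=P_t(10^2)=P_t(0^2)-P_t(0^3)$ and
\[
P_t(101)=P_t(0)-2P_t(0^2)+P_t(0^3),
\]
which is the case $n=0$ of (\ref{eq:consistency}). Collecting coefficients gives:
\begin{itemize}
\item[] for $P_t(0)$: $1-p_2$;
\item[] for $P_t(0^2)$: $2(1-p_1)-2(1-p_2)=2(p_2-p_1)$;
\item[] for $P_t(0^3)$: $(1-p_0)-2(1-p_1)+(1-p_2)=-(p_0-2p_1+p_2)$.
\end{itemize}
This is the first line of (\ref{eq:recurrence}).

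For $n>0$, I apply (\ref{eq:consistency}) with index $n$. This gives $P_t(0^{n+2}1)=P_t(10^{n+2})=P_t(0^{n+2})-P_t(0^{n+3})$ and
\[
P_t(10^{n+1}1)=P_t(0^{n+1})-2P_t(0^{n+2})+P_t(0^{n+3}).
\]
I then factor out $(1-p_0)^{n-1}$ and collect:
\begin{itemize}
\item[] for $P_t(0^{n+1})$: $(1-p_1)^2$;
\item[] for $P_t(0^{n+2})$: $2(1-p_1)(1-p_0)-2(1-p_1)^2=2(1-p_1)(p_1-p_0)$;
\item[] for $P_t(0^{n+3})$: $(1-p_0)^2-2(1-p_1)(1-p_0)+(1-p_1)^2=\bigl((1-p_0)-(1-p_1)\bigr)^2=(p_1-p_0)^2$.
\end{itemize}
This yields the second line of (\ref{eq:recurrence}).

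The main obstacle is an indexing mismatch in Eq.~(\ref{eq:ng1unsimp}) as printed. There the boundary terms appear as $P_t(0^{n+2}1)$ and $P_t(0^{n+3})$, whereas the neighborhood of a block $0^{n+1}$ has length $n+3$. So the correct terms are $x0^{n+1}y$, i.e. $P_t(0^{n+3})$, $P_t(0^{n+2}1)$, $P_t(10^{n+2})$ and $P_t(10^{n+1}1)$, each of total length $n+3$. I will recheck this against the transition rule before substituting, to confirm that the exponents of $(1-p_0)$ and the block lengths are consistent with one another.

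The only other point needing care is the degenerate length-one case. There the single site has both endpoint roles at once, which is why the $n=0$ equation involves $p_2$ and has a different form from the general one. This is the reason the statement separates the two cases.
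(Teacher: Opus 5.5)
Your proof is correct and is the paper's own argument: substitute the consistency relations (\ref{eq:consistency}) into Eqs.~(\ref{eq:n1unsimp}) and (\ref{eq:ng1unsimp}) and collect coefficients, and your coefficient computations for both $n=0$ and $n>0$ check out. The ``indexing mismatch'' you flag does not exist, since the printed terms $P_t(0^{n+3})$, $P_t(0^{n+2}1)$, $P_t(10^{n+2})$, $P_t(10^{n+1}1)$ already all have length $n+3$ and are exactly the list you call correct; the only slip is yours, because $P_t(0^{n+2}1)=P_t(0^{n+2})-P_t(0^{n+3})$ is (\ref{eq:consistency}) at index $n+1$, not $n$.
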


\section{Explicit Solutions}

If we take the initial condition that $P_0(0^{n+1}) = 1$ for all $n \geq 0$, equivalent to the assumption that our monolayer is fully intact at time $t=0$, then it is clear that we can iterate Eq. (\ref{eq:recurrence}) to express $P_t(0^{n+1})$ as a polynomial in $p_0, p_1, p_2$.
The following theorem gives these polynomials explicitly in an important special case.

\begin{theorem}\label{thm:main}
    Suppose that $P_t(0^{n+1})$ obeys the recurrence relation given in Eq. (\ref{eq:recurrence}) with initial condition $P_0(0^{n+1}) = 1$ for all $n \geq 0$.
    Then
    \begin{equation}\label{eq:soln}
        P_t(0^{n+1}) = (1-p_0)^{(n-1)t} \prod_{k=0}^{t-1} \left((1-p_1) + (p_1-p_0)(1-p_0)^k\right)^2,
    \end{equation}
    for all $t \geq 0$ and $n > 0$.
    Furthermore, if $(1-p_0)(1-p_2)=(1-p_1)^2$, then Eq. (\ref{eq:soln}) also holds for $n = 0$.
\end{theorem}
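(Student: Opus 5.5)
Set $q = 1-p_0$, $a = 1-p_1$ and $b = p_1 - p_0$, so that $a + b = q$. For $n>0$ the recurrence (\ref{eq:recurrence}) reads
\[
P_{t+1}(0^{n+1}) = q^{n-1}\left(a^2 P_t(0^{n+1}) + 2ab P_t(0^{n+2}) + b^2 P_t(0^{n+3})\right).
\]
This is a quadratic form in the shift operator acting on $n$. The natural ansatz is $P_t(0^{n+1}) = C_t\, x_t^{\,n}$, i.e.\ geometric in $n$ at each fixed time. Substituting gives
\[
C_{t+1} x_{t+1}^n = q^{n-1} C_t x_t^n (a + b x_t)^2,
\]
which is consistent for all $n>0$ provided $x_{t+1} = q x_t$ and $C_{t+1} = q^{-1} C_t (a+bx_t)^2$. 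The initial data $P_0 \equiv 1$ give $x_0 = 1$ and $C_0 = 1$. Hence $x_t = q^t$ and $C_t = q^{-t}\prod_{k=0}^{t-1}(a + b q^k)^2$, so
\[
P_t(0^{n+1}) = q^{nt - t}\prod_{k=0}^{t-1}(a+bq^k)^2 ,
\]
which is exactly (\ref{eq:soln}).

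To make this rigorous, I will argue by induction on $t$ with the statement ``(\ref{eq:soln}) holds for all $n>0$''. The case $t=0$ is immediate, since the empty product is $1$. For the inductive step, the $n>0$ recurrence involves only $P_t(0^{m+1})$ with $m \ge n \ge 1$. These values are covered by the induction hypothesis, so the $n=0$ value $P_t(0)$ never enters. Plugging in the hypothesis and factoring out $q^{(n-1)t}\prod_{k<t}(\cdot)^2$ leaves
\[
q^{n-1}\left(a^2 + 2abq^t + b^2q^{2t}\right) = q^{n-1}(a+bq^t)^2 .
\]
The exponent of $q$ is then $(n-1)t + (n-1) = (n-1)(t+1)$, and the new factor is the $k=t$ term. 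This closes the induction.

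For $n=0$, I will run a separate induction, using the $n>0$ formulas for $P_t(0^2)$ and $P_t(0^3)$ already established. Write $\Pi_t = \prod_{k<t}(a+bq^k)^2$. The claim is $P_t(0) = q^{-t}\Pi_t$, and we have $P_t(0^2) = \Pi_t$ and $P_t(0^3) = q^t\Pi_t$. Using the first line of (\ref{eq:recurrence}), we need
\[
(1-p_2)q^{-t} + 2(p_2-p_1) - (p_0-2p_1+p_2)q^t = q^{-t-1}(a+bq^t)^2 .
\]
Multiplying by $q^{t+1}$ turns both sides into quadratics in $u = q^t$, and I will match coefficients. The $u^2$ coefficients are $-q(p_0-2p_1+p_2)$ on the left and $b^2$ on the right. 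The $u$ coefficients are $2q(p_2-p_1)$ and $2ab$. The constant terms are $q(1-p_2)$ and $a^2$.

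The constant-term match is exactly the hypothesis $(1-p_0)(1-p_2) = (1-p_1)^2$. The main obstacle is confirming that this single hypothesis also forces the other two coefficient identities. Write $1-p_2 = a^2/q$, so $p_2 = 1 - a^2/q$.
\begin{itemize}
\item For the linear term, $q(p_2 - p_1) = q(a - a^2/q) = a(q-a) = ab$, as needed.
\item For the quadratic term, $p_0 - 2p_1 + p_2 = (p_2-p_1) - (p_1-p_0) = ab/q - b$. Then $-q(ab/q - b) = b(q - a) = b^2$, as needed.
\end{itemize}
Thus all three coefficients agree, the induction for $n=0$ closes, and the theorem follows.
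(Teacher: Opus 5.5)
Your proof is correct, and it takes a genuinely different route from the paper. The paper derives the formula rather than verifying it:
\begin{itemize}
\item it encodes the data in $\mathbf{P}_t(z)=\sum_{n\ge 0}P_t(0^{n+1})z^n$;
\item it shows $\mathbf{P}_{t+1}(z)=g(\beta z)\mathbf{P}_t(\beta z)$ modulo negative-degree terms, and iterates to $P_t(0^{n+1})=[z^n]\frac{1}{1-\beta^t z}\prod_{m=1}^t g(\beta^m z)$;
\item it extracts this coefficient by trading the residue at the high-order pole $z=0$ for the residue at the simple pole $z=\beta^{-t}$, via Lemma~\ref{lemma:residue_sum}.
\end{itemize}

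You instead guess a solution that is geometric in $n$ and confirm it by induction on $t$. This avoids the formal power-series bookkeeping and the residue lemma altogether, and your $n>0$ induction never divides by $1-p_0$. The two arguments are closer than they appear. Since each $g(\beta^m z)$ contains only non-positive powers of $z$, one has $[z^n]\frac{1}{1-xz}h(z)=x^n h(1/x)$ for any polynomial $h$ in $z^{-1}$ and $n\ge 0$. This is exactly the invariance of geometric sequences under the recurrence that your ansatz exploits. The residue computation is a heavier way of evaluating the same quantity, though it has the merit of producing the answer without having to guess it.

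For $n=0$, the paper only asserts that the hypothesis makes the first line of (\ref{eq:recurrence}) coincide with the $n>0$ form evaluated at $n=0$. Your three coefficient identities are precisely the verification of that assertion. Equivalently, they are $(1-p_1)^2=(1-p_0)(1-p_2)$, $(1-p_1)(p_1-p_0)=(1-p_0)(p_2-p_1)$ and $(p_1-p_0)^2=-(1-p_0)(p_0-2p_1+p_2)$. On this point your write-up is more complete than the paper's. The treatment of $n>0$ without the hypothesis is the same in both: the $n>0$ subsystem never involves $P_t(0)$.
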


In order to prove this theorem, we will need the following lemma from complex analysis.

\begin{lemma}\label{lemma:residue_sum}
    Let $P(z)$ be a polynomial of degree $n$, and $Q(z)$ be a polynomial of degree $m$.
    If $m \geq n + 2$, then the finite residues of the rational function
    \[
        f(z) := \frac{P(z)}{Q(z)},
    \]
    sum to zero.
    That is,
    \[
        \sum_{a_k}\Res_{z = a_k} f = 0,
    \]
    where the sum is taken over the poles $a_k$ of $f$ (the zeros of $Q$).
    
    Equivalently, we have that
    \[
        \Res_{z = \infty} f = 0.
    \]
\end{lemma}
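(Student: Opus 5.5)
The plan is to prove the statement via the residue at infinity, using the Laurent expansion of $f$ there. Either the contour-integral formulation or a purely algebraic argument works; I would use the contour integral as the main line and keep the algebraic one as a cross-check.

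First I take a circle $|z| = R$ with $R$ larger than the modulus of every zero of $Q$. Since $Q$ has finitely many zeros, all poles of $f$ lie inside this circle. The residue theorem then gives
\[
    \frac{1}{2\pi i}\oint_{|z|=R} f(z)\xdif z = \sum_{a_k}\Res_{z=a_k} f .
\]
By definition, $\Res_{z=\infty} f = -\frac{1}{2\pi i}\oint_{|z|=R} f(z)\xdif z$. So the two formulations in the statement are equivalent, and it suffices to show that this integral vanishes.

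Next I bound the integral directly. Write $P(z) = \sum_{j\le n} b_j z^j$ and $Q(z) = \sum_{j\le m} c_j z^j$ with $c_m \neq 0$. For large $|z|$ we have $|P(z)| \le C|z|^n$ and $|Q(z)| \ge \tfrac12 |c_m| |z|^m$. Hence $|f(z)| \le C' R^{n-m} \le C' R^{-2}$ on the circle. The integral is therefore bounded in modulus by $2\pi R \cdot C' R^{-2} \to 0$ as $R \to \infty$. But the integral does not depend on $R$ once $R$ exceeds the largest pole modulus, since the integrand is holomorphic in the annulus between any two such circles. So the integral is identically zero, and the sum of the finite residues vanishes.

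For the algebraic cross-check, I expand $f$ in a Laurent series in $1/z$ for large $|z|$:
\[
    f(z) = z^{n-m}\,\frac{\sum_j b_j z^{j-n}}{\sum_j c_j z^{j-m}} .
\]
The second factor is holomorphic at $\infty$. Because $n - m \le -2$, the series contains only powers $z^{-2}, z^{-3},\dots$. The coefficient of $z^{-1}$, which is minus the residue at infinity, is therefore zero.

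There is no real obstacle here. The only point needing care is justifying that the contour integral is independent of $R$ and that the lower bound on $|Q|$ holds uniformly for large $R$. Both follow from the standard estimates above.
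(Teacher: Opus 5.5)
Your proposal is correct and follows essentially the same route as the paper. Both enclose all poles in a large circle $|z|=R$, use $|f(z)|\le C/|z|^2$ to bound the contour integral by $O(1/R)$, identify it with the sum of finite residues via the residue theorem, and let $R\to\infty$; your explicit justification of the equivalence with $\Res_{z=\infty} f$ and the Laurent-series cross-check are extras the paper leaves implicit.
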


\begin{proof}
    Define
    \[
        I_R := \left|\frac{1}{2\pi i} \int_{|z| = R} f(z) \xdif z\right|
    \]
    Since $m \geq n + 2$, there exists $C \in \RR$ such that for $|z|$ sufficiently large we have
    \[                                                                                                                                                                                              
        |f(z)| \leq \frac{C}{|z|^2}.
    \]
    Therefore, for $R$ sufficiently large, we may bound $I_R$ by
    \begin{equation}\begin{split}\label{eq:arc_estimate}
        I_R
            &= \left|\frac{1}{2\pi i} \int_{|z| = R} f(z) \xdif z\right| \\
            &\leq \frac{1}{2\pi} \int_{|z| = R} |f(z)| |\xdif z| \\
            &\leq \frac{1}{2\pi} \int_{|z| = R} \frac{C}{|z|^2} |\xdif z| \\
            &= \frac{1}{2\pi} \frac{C}{R^2} (2\pi R) \\
            &= \frac{C}{R}.
    \end{split}\end{equation}
    On the other hand, for $R$ sufficiently large, the contour $\{|z| = R\}$ will enclose all of the poles of $f$, so we may compute $I_R$ via the residue theorem (see, e.g., \cite{ahlforsComplexAnalysisIntroduction1979}).
    In particular, we have
    \[
        I_R = \sum_{a_k} \Res_{z = a_k} f(z).
    \]
    Pairing this with Eq. (\ref{eq:arc_estimate}), and sending $R \to \infty$, we conclude the proof.
\end{proof}

\begin{proof}[Proof of Theorem \ref{thm:main}]
    Suppose first that $(1-p_0)(1-p_2)=(1-p_1)^2$, so that the two recurrences of Eq. (\ref{eq:recurrence}) simplify to
    \begin{equation}\label{eq:coeff_simple}
        \begin{split}
            P_{t+1}(0^{n+1})
                &= (1 - p_0)^{n-1}\left((1 - p_1)^2P_t(0^{n+1})\right. \\
                &\left.\qquad+ 2(1 - p_1)(p_1 - p_0)P_t(0^{n+2}) + (p_1 - p_0)^2P_t(0^{n+3})\right),
        \end{split}
    \end{equation}
    for all $n \geq 0$.
    
    \proofpart{Generating Functions}

    For $t \geq 0$, define the formal power series $\mathbf{P}_t \in \RR[[z]]$ by $\mathbf{P}_t(z) := \sum_{n \geq 0} P_t(0^{n+1})z^n$.
    Conversely, we have $[z^n] \mathbf{P}_t(z) = P_t(0^{n+1})$, where the operator $[z^n]F(z)$ extracts the coefficient of $z^n$ of the power series $F$.
    We would like to transform Equation (\ref{eq:coeff_simple}) from a family of recurrences in $P_t(0^{n+1})$ to a recurrence of $\mathbf{P}_t$.
    
    For $n \geq 0$, we may compute
    \[\begin{split}
        [z^n] \mathbf{P}_{t+1}(z)
            &= P_{t+1}(0^{n+1}) \\
            &= (1 - p_0)^{n-1}\left((1 - p_1)^2P_t(0^{n+1})\right. \\
            &\left.\qquad+ 2(1 - p_1)(p_1 - p_0)P_t(0^{n+2}) + (p_1 - p_0)^2P_t(0^{n+3})\right) \\
            &= (1 - p_0)^{n-1}\left((1 - p_1)^2 [z^n]\mathbf{P}_t(z)\right. \\
            &\left.\qquad+ 2(1 - p_1)(p_1 - p_0)[z^{n+1}]\mathbf{P}_t(z) + (p_1 - p_0)^2[z^{n+2}]\mathbf{P}_t(z)\right).
    \end{split}\]
    Using the coefficient extraction property $[z^{n+k}]F(z) = [z^n]z^{-k}F(z)$, we may further simplify this to
    \[\begin{split}
        [z^n] \mathbf{P}_{t+1}(z)
            &= (1 - p_0)^{n-1}[z^n]\left((1 - p_1)^2 + 2(1 - p_1)(p_1 - p_0)z^{-1} + (p_1 - p_0)^2z^{-2}\right)\mathbf{P}_t(z) \\
            &= (1 - p_0)^n[z^n]\frac{(1 - p_1)^2 + 2(1 - p_1)(p_1 - p_0)z^{-1} + (p_1 - p_0)^2z^{-2}}{1 - p_0}\mathbf{P}_t(z) \\
            &= (1 - p_0)^n[z^n]\frac{\left((1 - p_1) + (p_1 - p_0)z^{-1}\right)^2}{1 - p_0}\mathbf{P}_t(z).
    \end{split}\]
    
    Let $g(z) := \frac{\left((1 - p_1) + (p_1 - p_0)z^{-1}\right)^2}{1 - p_0} = \frac{\left((1 - p_1)z + (p_1 - p_0)\right)^2}{(1 - p_0)z^2}$ and $\beta := 1 - p_0$, so that we may rewrite the above as
    \begin{equation}\label{eq:coeffs}
        [z^n] \mathbf{P}_{t+1}(z)
            = \beta^n [z^n] g(z) \mathbf{P}_t(z)
            = [z^n] g(\beta z) \mathbf{P}_t(\beta z).
    \end{equation}

    Equation (\ref{eq:coeffs}) is valid for $n \geq 0$.
    We would like to extend this to hold for all $n \in \ZZ$.
    Since $g$ contains no terms of positive degree, if $\mathbf{P}_t$ were to have terms of negative degree then they would not affect the terms of non-negative degree in the product on the right-hand side of Equation (\ref{eq:coeffs}).
    Therefore, we may consider Eq. (\ref{eq:coeffs}) an equality of power series so long as we consider only terms of non-negative degree in the end.
    
    In fact, formally one may view $\mathbf{P}_t$ as an element of the Hardy space on the circle, and the right-hand side of Equation (\ref{eq:coeffs}) as a composition of Toeplitz and dilation operators (see, e.g., \cite{axlerToeplitzOperators2010}).
    With this view, we may rewrite Eq. (\ref{eq:coeffs}) as
    \[
        \mathbf{P}_{t+1} = D_\beta T_g \mathbf{P}_t,
    \]
    and the considerations of the previous paragraph are equivalent to the fact that $T_fT_g = T_{fg}$ for functions $f,g$ with no terms of positive degree (c.f. \cite[Theorem~2]{axlerToeplitzOperators2010}).
    
    Iterating Eq. (\ref{eq:coeffs}), for $n,t \geq 0$ we get
    \[
        [z^n] \mathbf{P}_t(z) = [z^n] \mathbf{P}_0(\beta^t z) \prod_{m=1}^t g(\beta^m z).
    \]
    Our initial condition $P_0(0^{n+1}) = 1$ gives that $\mathbf{P}_0(z) = \frac{1}{1 - z}$, so we conclude
    \begin{equation}\label{eq:coeff_fn}
        P_t(0^{n+1}) = [z^n] \frac{1}{1 - \beta^t z} \prod_{m=1}^t g(\beta^m z),
    \end{equation}
    for $n,t \geq 0$.

    \proofpart{Coefficient Extraction}

    By, e.g., Cauchy's Theorem \cite{ahlforsComplexAnalysisIntroduction1979} we may recognize the $z^{-1}$ coefficient of a power series $F$ as the residue of $F$ at zero; that is, we may write
    \[
        [z^n]F(z) = \Res_{z=0} \frac{F(z)}{z^{n+1}}.
    \]
    Therefore, Eq. (\ref{eq:coeff_fn}) may be rewritten as
    \begin{equation}\label{eq:coeff_res}
        P_t(0^{n+1}) = \Res_{z=0} \frac{1}{z^{n+1}(1 - \beta^t z)} \prod_{m=1}^t g(\beta^m z),
    \end{equation}
    so we have reduced the problem to that of computing the residue at zero of the rational function
    \[
        f(z) := \frac{1}{z^{n+1}(1 - \beta^t z)} \prod_{m=1}^t g(\beta^m z).
    \]

    Since the order of the pole of $f$ at zero is $2t + n + 1$, computing the residue at zero is prohibitively difficult.
    On the other hand, $f$ has a single other pole at $z = \beta^{-t}$ that is a simple pole.
    Since we may write $f(z) = \frac{P(z)}{Q(z)}$ for polynomials $P,Q$ with order $2t, 2t + n + 2$, respectively, we may use Lemma \ref{lemma:residue_sum} to relate these two residues.
    In particular, we must have
    \begin{equation}\label{eq:residue}
        \Res_{z=0} f(z) = -\Res_{z=\beta^{-t}} f(z),
    \end{equation}
    which is a much simpler computation; in fact, by the definition of residues for simple poles, it is the limit
    \[
        \Res_{z=\beta^{-t}} \frac{1}{z^{n+1}(1-\beta^t z)}\prod_{m=1}^t g(\beta^m z)
            = \lim_{z \to \beta^{-t}} \frac{z - \beta^{-t}}{z^{n+1}(1-\beta^t z)}\prod_{m=1}^t g(\beta^m z).
    \]
    Simplification and direct substitution gives
    \[\begin{split}
        \lim_{z \to \beta^{-t}} \frac{z - \beta^{-t}}{z^{n+1}(1-\beta^t z)}\prod_{m=1}^t g(\beta^m z)
            &= \lim_{z \to \beta^{-t}} \frac{\beta^t z - 1}{\beta^t z^{n+1}(1-\beta^t z)}\prod_{m=1}^t g(\beta^m z) \\
            &= -\beta^{nt} \prod_{m=1}^t g(\beta^{m-t})
             = -\beta^{nt} \prod_{m=0}^{t-1} g(\beta^{-m}).
    \end{split}\]
    Substituting this back into Eq. (\ref{eq:coeff_res}) by way of Eq. (\ref{eq:residue}), we conclude that
    \[\begin{split}
        P_t(0^{n+1})
            &= \beta^{nt} \prod_{m=0}^{t-1} g(\beta^{-m}) \\
            &= (1 - p_0)^{nt} \prod_{m=0}^{t-1} \frac{((1 - p_1) + (p_1 - p_0)(1 - p_0)^m)^2}{1 - p_0} \\
            &= (1 - p_0)^{(n-1)t} \prod_{m=0}^{t-1} ((1 - p_1) + (p_1 - p_0)(1 - p_0)^m)^2.
    \end{split}\]

    To conclude the proof, it remains to show that this expression still holds for $n > 0$ in case $(1-p_0)(1-p_2) \neq (1-p_1)^2$, but this is a simple consequence of the fact that this change only affects the recursion for the term $P_{t+1}(0)$.
    Since $P_{t+1}(0^{n+1})$ only depends on $P_t(0^{n+1}), P_t(0^{n+2})$, and $P_t(0^{n+3})$, making this change does not affect any other terms for $n > 0$.
\end{proof}

We may interpret the additional hypothesis of Theorem \ref{thm:main} as the requirement that the probabilities $1-p_i$ follow a geometric distribution; that is,
\[
    p_i = 1 - p_0 \exp(\alpha i),
\]
for some $\alpha \in \RR$.
Instead, the following theorem may be used for general $p_i$.

\begin{theorem}\label{thm:general}
    Let $A_t := P_t(0)$ be the expected density of healthy monolayer after $t$ iterations of the PCA rule defined in Eq. (\ref{eq:model}).
    Then
    \[
        A_t = (1-p_2)^t\left(1+\sum_{m=0}^{t-1} (1-p_2)^{-m-1}\left(2(p_2-p_1) -(p_0-2p_1+p_2)(1-p_0)^m\right) C_m\right),
    \]
    where
    \[
        C_m = \prod_{k=0}^{m-1} \left[(1-p_1) + (p_1-p_0)(1-p_0)^k\right]^2.
    \]
\end{theorem}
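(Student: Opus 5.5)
The plan is to treat the first line of the recurrence (\ref{eq:recurrence}) as a first-order linear inhomogeneous recurrence in $A_t = P_t(0)$. Its forcing terms $P_t(0^2)$ and $P_t(0^3)$ are already known explicitly from Theorem \ref{thm:main}, because that part of the theorem holds for $n>0$ without the extra hypothesis on the $p_i$. The argument then reduces to solving this scalar recurrence by variation of constants and identifying the forcing term with $C_m$.

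\emph{Step 1: the forcing terms.} Apply Eq. (\ref{eq:soln}) with $n=1$ and $n=2$:
\[
P_m(0^2) = \prod_{k=0}^{m-1}\left((1-p_1)+(p_1-p_0)(1-p_0)^k\right)^2 = C_m,
\qquad
P_m(0^3) = (1-p_0)^m C_m,
\]
valid for all $m\ge 0$. For $m=0$ the empty product gives $C_0=1$, which matches $P_0(0^2)=P_0(0^3)=1$. Substituting into the first line of (\ref{eq:recurrence}) gives
\[
A_{m+1} = (1-p_2)A_m + \left(2(p_2-p_1) - (p_0-2p_1+p_2)(1-p_0)^m\right)C_m =: (1-p_2)A_m + F_m ,
\]
with $A_0=1$.

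\emph{Step 2: solve the scalar recurrence.} Assume first $p_2\ne 1$ and divide by $(1-p_2)^{m+1}$. With $B_m := (1-p_2)^{-m}A_m$ this becomes $B_{m+1} = B_m + (1-p_2)^{-m-1}F_m$. Telescoping from $B_0=1$ gives
\[
B_t = 1 + \sum_{m=0}^{t-1}(1-p_2)^{-m-1}F_m .
\]
Multiplying back by $(1-p_2)^t$ yields exactly the claimed formula for $A_t$. A short induction on $t$ can confirm this directly.

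\emph{Step 3: the degenerate case.} The main subtlety is that the statement contains negative powers of $1-p_2$, so the case $p_2=1$ needs a word. There I would read the formula as the polynomial obtained after distributing $(1-p_2)^t$, namely
\[
A_t = (1-p_2)^t + \sum_{m=0}^{t-1}(1-p_2)^{t-m-1}F_m ,
\]
which has only nonnegative exponents. Both sides are polynomials in $p_0,p_1,p_2$ that agree whenever $p_2\neq1$, so they agree identically by continuity. Alternatively, one can verify this distributed form directly by the same induction, avoiding division altogether. Otherwise the argument is routine, since all the real work was done in Theorem \ref{thm:main}.
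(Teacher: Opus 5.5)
Your proposal is correct and follows the paper's proof essentially step for step. You substitute $P_m(0^2)=C_m$ and $P_m(0^3)=(1-p_0)^mC_m$ from Theorem~\ref{thm:main}, which is valid for $n>0$ without the extra hypothesis, then rescale by $(1-p_2)^{-m-1}$ and telescope; your treatment of the degenerate case $p_2=1$ via the distributed polynomial form, which the paper does not address, is a harmless and sensible addition.
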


\begin{proof}
    By Eq. (\ref{eq:recurrence}), $A_t = P_t(0)$ obeys the recurrence
    \begin{equation}\label{eq:general1}
        A_{t+1} = (1 - p_2)A_t + 2(p_2 - p_1)P_t(0^2) - (p_0 - 2p_1 + p_2)P_t(0^3),
    \end{equation}
    where $P_t(0^2), P_t(0^3)$ are now explicit functions of $t$ by Theorem \ref{thm:main}.
    Such recurrences are standard in the literature (see, e.g., \cite{elaydiIntroductionDifferenceEquations2005}), but for completeness we give a full proof.

    To eliminate the coefficient $(1 - p_2)$, we first multiply both sides of Eq. (\ref{eq:general1}) by $(1 - p_2)^{-t-1}$ and set $\tilde{A}_t := (1 - p_2)^{-t} A_t$ to get
    \[
        \tilde{A}_{t+1} = \tilde{A}_t + (1 - p_2)^{-t-1}\left(2(p_2 - p_1)P_t(0^2) - (p_0 - 2p_1 + p_2)P_t(0^3)\right).
    \]
    This may be solved easily using a telescoping sum and Theorem \ref{thm:main} as
    \[\begin{split}
        \tilde{A}_t
            &= \tilde{A}_0 + \sum_{m=0}^{t-1} \left[\tilde{A}_{m+1} - \tilde{A}_m\right] \\
            &= 1 + \sum_{m=0}^{t-1} (1 - p_2)^{-m-1}\left(2(p_2 - p_1)P_m(0^2) - (p_0 - 2p_1 + p_2)P_m(0^3)\right) \\
            &= 1+\sum_{m=0}^{t-1} (1-p_2)^{-m-1}\left(2(p_2-p_1) -(p_0-2p_1+p_2)(1-p_0)^m\right) C_m.
    \end{split}\]
    Substituting back in $A_t = (1 - p_2)^t \tilde{A}_t$, we conclude.
\end{proof}

\section{The Continuous-Time Model}\label{sec:cont}

Rather than a discrete model as considered in Section \ref{sec:model}, we may consider a continuous-time model.
In particular, we may replace Rule 2 in our model by the following continuous analog:
\begin{enumerate}
    \setcounter{enumi}{1}
    \item Each cell degrades after an exponentially distributed time with rate parameter $k_n$, where $n$ is the number of degraded neighbors of the site.
\end{enumerate}

With this rule, we may then view the discrete dynamical system described in Section \ref{sec:model} as a discretization of the continuous-time dynamical system here as follows: according to Rule 2, for $\Delta t$ sufficiently small, the probability of a degradation event in the window $[t, t + \Delta t]$ for a site with $i$ degraded neighbors is given by
\[
    p_i = 1 - e^{-k_i \Delta t} + o(\Delta t),
\]
for rate parameters $k_i \geq 0, i = 0, 1, 2$.
The remainder term comes from the fact that there is a small probability that in the window $[t, t + \Delta t]$, a neighbor will degrade and change the rate parameter from $k_i$ to $k_{i+1}$.
However, since we are interested in the limit as $\Delta t \to 0$, we may neglect this term for the analysis of the limit expression.
We may then approximate the degraded density at time $t$ by $A_{\floor{t/\Delta t}} = P_{\floor{t/\Delta t}}(0)$ and recover the continuous-time dynamics by sending $\Delta t \to 0$.

\begin{theorem}\label{thm:cont_simple}
    Let $p_i = 1 - e^{-k_i \Delta t}$ for $k_i \geq 0$, and suppose that $2k_1 = k_0 + k_2$.
    Then as $\Delta t \to 0$, $A(t) := A_{\floor{t/\Delta t}}$ converges to the function
    \begin{equation}\label{eq:poisson}
        A(t) = e^{(k_0 - 2k_1)t + \frac{2(k_1 - k_0)}{k_0}(1 - e^{-k_0 t})}.
    \end{equation}
\end{theorem}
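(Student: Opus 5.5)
Since $p_i = 1-e^{-k_i\Delta t}$, the hypothesis $2k_1 = k_0+k_2$ is exactly $(1-p_0)(1-p_2) = e^{-(k_0+k_2)\Delta t} = e^{-2k_1\Delta t} = (1-p_1)^2$. So the second part of Theorem \ref{thm:main} applies with $n=0$, and I will work directly from the closed form. Write $T := \floor{t/\Delta t}$. Substituting $1-p_i = e^{-k_i\Delta t}$ into Eq. (\ref{eq:soln}) gives
\[
A_T = e^{k_0 T\Delta t}\prod_{m=0}^{T-1}\left(e^{-k_1\Delta t} + \left(e^{-k_0\Delta t}-e^{-k_1\Delta t}\right)e^{-k_0 m\Delta t}\right)^2 .
\]
The prefactor $e^{k_0T\Delta t}$ converges to $e^{k_0 t}$, because $T\Delta t \to t$.

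To handle the product, I factor out $e^{-k_1\Delta t}$ from each term, which yields a further factor $e^{-2k_1 T\Delta t} \to e^{-2k_1 t}$. What remains is
\[
\prod_{m=0}^{T-1}\bigl(1 + \varepsilon\, e^{-k_0 m\Delta t}\bigr)^2, \qquad \varepsilon := e^{(k_1-k_0)\Delta t}-1 = (k_1-k_0)\Delta t + O(\Delta t^2).
\]
Taking logarithms, I write $\log(1+x) = x + O(x^2)$ with $x = \varepsilon e^{-k_0 m\Delta t}$, so $|x| \le |\varepsilon| = O(\Delta t)$ uniformly in $m$. The log of the product is then
\[
2\sum_{m=0}^{T-1}\varepsilon e^{-k_0 m\Delta t} + O(T\Delta t^2).
\]
The error term is $O(t\,\Delta t) \to 0$.

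The main sum is $2\varepsilon\,\frac{1-e^{-k_0T\Delta t}}{1-e^{-k_0\Delta t}}$, a geometric series. For $k_0>0$, $\varepsilon/(1-e^{-k_0\Delta t}) \to (k_1-k_0)/k_0$, and the numerator tends to $1-e^{-k_0 t}$. The sum therefore converges to $\frac{2(k_1-k_0)}{k_0}(1-e^{-k_0 t})$. Equivalently, it is a Riemann sum for $2(k_1-k_0)\int_0^t e^{-k_0 s}\,\mathrm{d}s$, which also covers the degenerate case $k_0=0$, interpreted as the limit $2k_1 t$. Combining the three limits gives exactly Eq. (\ref{eq:poisson}).

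The only delicate step is controlling the logarithm error uniformly. The number of factors $T$ grows like $1/\Delta t$ while each term is $O(\Delta t)$, so I must check that the second-order remainder sums to $O(\Delta t)$ rather than $O(1)$. This holds because $|x|\le|\varepsilon|$ for all $m$, and because for small $\Delta t$ each factor is positive (bounded below by $1-|\varepsilon|>0$), so the logarithm is well defined. I would also remark that the $o(\Delta t)$ modelling correction is ignored here, since the theorem defines $p_i$ exactly.
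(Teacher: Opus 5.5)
Your proof is correct and takes essentially the same route as the paper: it uses the $n=0$ case of Theorem~\ref{thm:main}, justified by $2k_1=k_0+k_2 \iff (1-p_0)(1-p_2)=(1-p_1)^2$, then takes logarithms, linearizes each factor to first order in $\Delta t$, and passes to the limit. The differences are only in rigor. You factor out $e^{-k_1\Delta t}$ and sum the geometric series exactly instead of appealing to a Riemann sum, bound the quadratic remainder explicitly by $O(T\,\Delta t^2)$, and treat general $t$ via $T=\floor{t/\Delta t}$ and the case $k_0=0$; these make rigorous the steps the paper handles informally.
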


\begin{proof}
    Note first that the condition $2k_1 = k_0 + k_2$ is equivalent to $(1-p_1)^2 = (1-p_0)(1-p_2)$, so we may use Theorem \ref{thm:main} to solve for $A_{\floor{t/\Delta t}}$.
    At time $t = n \Delta t$, Theorem \ref{thm:main} with the choice of Poisson-distributed $p_i$ gives
    \[\begin{split}
        \log A(t)
            &= \log A_n \\
            &= k_0 n \Delta t + \log \prod_{j=0}^{n-1}\left(e^{-k_1 \Delta t} + (e^{-k_0 \Delta t} - e^{-k_1 \Delta t}e^{-k_0 j \Delta t})\right)^2 \\
            &= k_0 t + 2\sum_{j=0}^{n-1} \log\left(e^{-k_1 \Delta t} + (e^{-k_0 \Delta t} - e^{-k_1 \Delta t})e^{-k_0 j \Delta t}\right) \\
            &= k_0 t + 2\sum_{j=0}^{n-1} \log\left(1 - (k_1 + (k_0 - k_1)e^{-k_0 j \Delta t}) \Delta t + O((\Delta t)^2)\right) \\
            &= k_0 t - 2\sum_{j=0}^{n-1} \left((k_1 + (k_0 - k_1)e^{-k_0 j \Delta t}) \Delta t + O((\Delta t)^2)\right).
    \end{split}\]
    As $\Delta t \to 0$, the Riemann sum becomes an integral and the higher order terms vanish, giving
    \[\begin{split}
        \log A(t)
            &= k_0 t - 2 \int_0^t (k_1 + (k_0 - k_1)e^{-k_0 s}) \xdif s \\
            &= (k_0 - 2k_1)t + \frac{2(k_0 - k_1)}{k_0}(e^{-k_0 t} - 1) \\
            &= (k_0 - 2k_1)t + \frac{2(k_1 - k_0)}{k_0}(1 - e^{-k_0 t}).
    \end{split}\]
    Taking the exponential of both sides, we get Eq. (\ref{eq:poisson}).
\end{proof}

\begin{figure}
    \centering
    \includegraphics[width=0.95\linewidth]{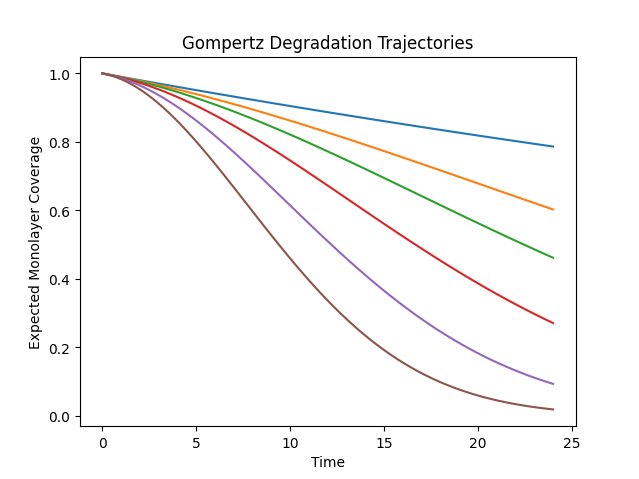}
    \caption{Degradation trajectories computed via Equation (\ref{eq:poisson}) with $k_i = 0.01 + \alpha i$ for values of $\alpha \in [0, 0.7]$.}
    \label{fig:gompertz}
\end{figure}

The function $A(t)$ takes the form of the product of an exponential decay and a Gompertz-type decay \cite{tjorveUseGompertzModels2017}, giving $A(t)$ a characteristic sigmoid shape (see Figure \ref{fig:gompertz}).
In particular, we may compute the \emph{hazard rate} $h(t)$ associated with the decay process $A(t)$:
\begin{equation}\label{eq:hazard}
    h(t) := -\frac{A'(t)}{A(t)} = -\frac{\xdif}{\xdif t} \log A(t).
\end{equation}
The hazard rate $h(t)$ gives the instantaneous decay rate for the process and may be thought of as an \emph{effective} $k$ for the system as a whole.
Substituting Eq. (\ref{eq:poisson}) into Eq. (\ref{eq:hazard}), we get the hazard rate
\begin{equation}\label{eq:hazard_simple}
    h(t) = k_0 + 2(k_1 - k_0)(1 - e^{-k_0 t}).
\end{equation}

Therefore, at time $t=0$, the system degrades with rate $k_0$, which is natural since at time $t=0$ the monolayer is fully intact and there are no defects.
As $t$ increases, $h(t)$ interpolates between $k_0$ and $2k_1 - k_0 = k_2$ at $t=\infty$.
Again, this is to be expected since at $t=\infty$, the monolayer is fully degraded.
Therefore, Eq. (\ref{eq:hazard_simple}) gives a quantitative control on the rate at which the system degrades as a whole.

As an example of Theorem \ref{thm:cont_simple}, if $k_i \equiv k \geq 0$, then the local interactions disappear and Theorem \ref{thm:cont_simple} reduces to
\[
    A(t) = e^{-kt},
\]
which is the coverage equation one gets with the Polanyi-Wigner equation \cite{ligterinkOverviewDesorptionParameters2023} of order $1$:
\[
    \frac{\xdif A}{\xdif t} = -kA,
\]
where $k=ve^{\frac{-E_a}{RT}}$ is distributed according to the Arrhenius equation.
In practice, one would expect the rate parameters $k_i$ to include a dependence on $i$ in the activation energy $E_a$:
\begin{equation}\label{eq:arrhenius}
    k_i = ve^{\frac{-E_a(i)}{RT}},
\end{equation}
where $v$ is the pre-exponential factor, $R$ is the molar gas constant, $T$ is temperature, and $E_a(i)$ is the neighbor-dependent activation energy of the desorption reaction.
If each degraded neighbor removes a fixed amount of energy to the required activation energy of the reaction, then we can express $E_a(i)$ as
\begin{equation}\label{eq:activation_energy}
    E_a(i) = E_0 - i \delta E,
\end{equation}
for some (tunable) parameters $E_0, \delta E$ depending on the experimental parameters.

Substituting Eq. (\ref{eq:activation_energy}) into Eq. (\ref{eq:arrhenius}), we get the rate parameters
\[
    k_i = ve^{\frac{-E_0}{RT}}e^{\frac{i\delta E}{RT}}.
\]
In this case, the $k_i$ are in geometric sequence, so we cannot, in general, apply Theorem \ref{thm:cont_simple}.
We need the following theorem, the continuous analog of Theorem \ref{thm:general}.

\begin{theorem}\label{thm:cont_general}
    Let $p_i = 1 - e^{-k_i \Delta t}$ for $k_i \geq 0$.
    Then as $\Delta t \to 0$, $A(t) := A_{\floor{t/\Delta t}}$ converges to the function
    \begin{equation}\label{eq:poisson_general}
        A(t) = e^{-k_2 t}\left(1 + \int_0^t (2(k_2 - k_1) - (k_0 - 2k_1 + k_2)e^{-k_0 s})e^{(k_2 - 2k_1)s + \frac{2(k_1 - k_0)}{k_0}(1 - e^{-k_0 s})} \xdif s\right)\hspace{-1pt}.
    \end{equation}
\end{theorem}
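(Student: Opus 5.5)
We start from Theorem \ref{thm:general} with $p_i = 1 - e^{-k_i\Delta t}$, write $n = \floor{t/\Delta t}$, and pass to the limit $\Delta t \to 0$ term by term. The prefactor is $(1-p_2)^n = e^{-k_2 n\Delta t} \to e^{-k_2 t}$. The formula then reads
\[
A_n = e^{-k_2 n\Delta t}\Bigl(1 + \sum_{m=0}^{n-1} e^{k_2(m+1)\Delta t}\bigl(2(p_2-p_1) - (p_0-2p_1+p_2)(1-p_0)^m\bigr)C_m\Bigr).
\]
The goal is to show that the sum is a Riemann sum for the integral in Eq. (\ref{eq:poisson_general}), with mesh $\Delta t$ and sample points $s = m\Delta t$.

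For the bracketed coefficient, expand to first order: $p_i = k_i\Delta t + O((\Delta t)^2)$. This gives $2(p_2-p_1) = 2(k_2-k_1)\Delta t + O((\Delta t)^2)$ and $p_0-2p_1+p_2 = (k_0-2k_1+k_2)\Delta t + O((\Delta t)^2)$. We also have $(1-p_0)^m = e^{-k_0 s}$ exactly. So the coefficient equals $\Delta t\,\bigl(2(k_2-k_1) - (k_0-2k_1+k_2)e^{-k_0 s}\bigr) + O((\Delta t)^2)$, uniformly in $m$. Likewise $e^{k_2(m+1)\Delta t} = e^{k_2 s}(1+O(\Delta t))$ uniformly for $s \le t$.

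The factor $C_m$ is, up to the factor $(1-p_0)^{-m}$, the quantity already analysed in the proof of Theorem \ref{thm:cont_simple}. Precisely, $C_m = (1-p_0)^m P_m(0^2)$ by Theorem \ref{thm:main} with $n=1$, and $P_m(0^2) = (1-p_0)^{0}\prod(\cdots)^2$. Repeating that computation gives
\[
\log C_m = -2\sum_{j=0}^{m-1}\bigl(k_1 + (k_0-k_1)e^{-k_0 j\Delta t}\bigr)\Delta t + O(m(\Delta t)^2),
\]
so $C_m = \exp\bigl(-2k_1 s + \tfrac{2(k_1-k_0)}{k_0}(1-e^{-k_0 s})\bigr) + o(1)$. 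Combining this with $e^{k_2 s}$ gives exactly the exponential appearing in the integrand of Eq. (\ref{eq:poisson_general}).

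The main step needing care is the uniformity of these approximations over $0 \le m \le n-1$. We need it so that the accumulated error in a sum of $n \approx t/\Delta t$ terms, each of size $O(\Delta t)$, still vanishes. I would handle this as follows.
\begin{itemize}
\item The $O(m(\Delta t)^2)$ error in $\log C_m$ is bounded by $O(t\,\Delta t)$ uniformly in $m$, using the Taylor bound $\log(1-x) = -x + O(x^2)$ for small $x$.
\item The Riemann-sum error in the exponent is $O(\Delta t)$ uniformly on $[0,t]$, because the integrand $k_1 + (k_0-k_1)e^{-k_0 s}$ is Lipschitz.
\item Since the exponent is bounded on $[0,t]$, these errors give a uniform relative error $O(\Delta t)$ in $C_m$.
\end{itemize}
Each summand is then $\Delta t\,F(m\Delta t) + O((\Delta t)^2)$, where $F$ is the continuous integrand of Eq. (\ref{eq:poisson_general}). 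Summing gives $\int_0^{n\Delta t}F(s)\,\mathrm{d}s + O(\Delta t)$. Finally, $n\Delta t \to t$ and $F$ is bounded, so the sum converges to $\int_0^t F(s)\,\mathrm{d}s$, which yields Eq. (\ref{eq:poisson_general}).
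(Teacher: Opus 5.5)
Your proposal is correct and follows the paper's own route: substitute $p_i = 1 - e^{-k_i\Delta t}$ into Theorem \ref{thm:general}, expand the coefficients to first order in $\Delta t$, and reuse the product asymptotics from the proof of Theorem \ref{thm:cont_simple} for $C_m$. You then recognize a Riemann sum, and your uniformity bookkeeping supplies the error control the paper leaves implicit. One harmless slip: Theorem \ref{thm:main} with $n=1$ gives $C_m = P_m(0^2)$ (it is $P_m(0^3)$, not $C_m$, that equals $(1-p_0)^m P_m(0^2)$), but you compute the expansion of $\log C_m$ directly from the product, so it is unaffected.
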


\begin{proof}
    First, define the quantity
    \[
        \tilde{A}_n
            := e^{k_0 n\Delta t}\prod_{m=0}^{n-1}\left(e^{-k_1 \Delta t} + (e^{-k_0 \Delta t} - e^{-k_1 \Delta t})e^{-k_0 m \Delta t}\right)^2.
    \]
    By Theorem \ref{thm:cont_simple}, we have that
    \begin{equation}\label{eq:Atildelimit}
        \tilde{A}(t) := \tilde{A}_{\floor{t/\Delta t}} \to e^{(k_0-2k_1)t + \frac{2(k_1-k_0)}{k_0}(1 - e^{-k_0 t})},
    \end{equation}
    as $\Delta t \to 0$.
    Substituting the probabilities $p_i = 1 - e^{-k_i \Delta t}$ into the expression for $A_n$ given by Theorem \ref{thm:general} and using $\tilde{A}_n$ to simplify the expression, at time $t = n\Delta t$, we find
    \[\begin{split}
        A(n \Delta t)
            &= A_n \\
            &= e^{-k_2 t}\left(1 + \sum_{m=0}^{n-1} e^{(k_2 - k_0) m \Delta t}\left(2(k_2 - k_1) - (k_0 - 2k_1 + k_2)e^{-k_0 m \Delta t}\right)\tilde{A}(m \Delta t) \Delta t\right. \\
            &\left.\phantom{\sum_{m=0}^{n-1}} + O((\Delta t)^2) \right).
    \end{split}\]
    As in the proof of Theorem \ref{thm:cont_simple}, we recognize the sum as a Riemann sum.
    Therefore, using Eq. (\ref{eq:Atildelimit}), as $\Delta t \to 0$ we recover Eq. (\ref{eq:poisson_general}).
\end{proof}

\section{Discussion}

Self-assembled monolayers are a fundamental tool in the construction of modern biosensors, and understanding properties of their degradation is crucial in determining the lifetime of these biosensors.
In this work we have provided a probabilistic cellular automaton model for the degradation of self-assembled monolayers that generalizes the Polanyi-Wigner equation by including adsorbate-adsorbate interactions.
Furthermore, we solved for the expected degraded area explicitly in one dimension using methods from generating function theory and complex analysis.

The use of complex analysis in the study of recurrence equations and generating functions is not new.
Analytic combinatorics, as pioneered by Flajolet \& Sedgewick \cite{flajoletAnalyticCombinatorics2009}, uses complex analytic information about the generating function to extract asymptotic information about the growth rate of the coefficients.
However, by utilizing exact knowledge of the location of all of the poles, the proof of Theorem \ref{thm:main} computes the coefficients exactly by leveraging the rigidity of meromorphic functions.

In fact, by \emph{mutatis mutandi} repeating the proof of Theorem \ref{thm:main}, one can prove the following theorem on (possibly infinite) linear recurrences:

\begin{theorem}
    Suppose $g_n(z) := \sum_{k=0}^\infty a_k^{(n)}z^k$ are entire functions and $b \in \RR$.
    If $f_m^{(n)}$ is a two-parameter sequence satisfying
    \begin{align*}
        f_m^{(n+1)}
            &= b^m \sum_{k=0}^\infty a_k^{(n)} f_{m+k}^{(n)}, \\
        f_m^{(0)}
            &= 1,
    \end{align*}
    then we must have
    \[
        f_m^{(n)} = b^{mn}\prod_{k=1}^n g_{k-1}(b^{n-k}).
    \]
\end{theorem}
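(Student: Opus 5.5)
The plan is to follow the proof of Theorem \ref{thm:main}, but to replace the residue computation by a direct induction on $n$. The inductive ansatz is that $f_m^{(n)} = b^{mn} c_n$, where $c_n$ does not depend on $m$. The case $n=0$ holds with $c_0=1$, matching the empty product. For the inductive step, substitute the ansatz into the recurrence:
\[
f_m^{(n+1)} = b^m\sum_{k\ge 0} a_k^{(n)} b^{(m+k)n} c_n = b^{m(n+1)} c_n \sum_{k\ge 0} a_k^{(n)} (b^{n})^k = b^{m(n+1)}\, c_n\, g_n(b^n).
\]
The interchange is justified because each $g_n$ is entire, so the series converges absolutely at $z=b^n$. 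This is exactly the "coefficient extraction" of Theorem \ref{thm:main} with the generating function made explicit: the shift $m\mapsto m+k$ is multiplication by $z^{-k}$, and the dilation by $b$ produces the evaluation point. The ansatz therefore propagates, and we obtain the multiplicative recursion $c_{n+1}=c_n\, g_n(b^n)$.

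Unwinding this recursion gives $c_n=\prod_{j=0}^{n-1} g_j(b^j)$. The remaining step is to match this with the stated form $\prod_{k=1}^n g_{k-1}(b^{n-k})$. Reindexing by $k\mapsto n+1-k$ turns the stated product into $\prod_{k=1}^n g_{n-k}(b^{k-1})$. This agrees with $c_n$ factor by factor exactly when $g_{k-1}$ and $g_{n-k}$ can be exchanged. In particular it holds in the time-homogeneous situation $g_n\equiv g$ of Theorem \ref{thm:main}, where $g$ does not depend on $t$; the two products are then the same multiset of factors $g(b^0),\dots,g(b^{n-1})$.

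I expect this reindexing to be the main obstacle. The paper invokes the dilation picture $\mathbf P_{t+1}=D_\beta T_g\mathbf P_t$, in which the $k$-th operator is applied at dilation depth $n-k$ when read from the outside in. So I would first try to derive the stated ordering by composing the operators $D_b T_{g_{k-1}}$ in the order suggested by the residue proof. I would then check the case $n=2$ directly: the inductive computation gives $c_2=g_0(1)g_1(b)$, while the stated formula gives $g_0(b)g_1(1)$. If these genuinely differ for non-homogeneous $g_n$, then the statement can only be established in the homogeneous case $g_n\equiv g$ (or, more generally, when the $g_n$ are symmetric under the reversal $n-k\leftrightarrow k-1$). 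In that event I would record this hypothesis as the precise condition under which the displayed formula holds.
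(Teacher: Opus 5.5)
Your induction is correct, and your suspicion about the indexing is justified. The displayed formula is false as soon as the $g_n$ genuinely depend on $n$, so you can drop the hedge and say so. Take $b=2$, $g_0(z)=z$ (so $a^{(0)}_1=1$ and all other $a^{(0)}_k=0$), and $g_n\equiv 1$ for $n\ge 1$. Then $f^{(1)}_m=2^m f^{(0)}_{m+1}=2^m$ and $f^{(2)}_m=2^m f^{(1)}_m=4^m$. This agrees with your $b^{2m}g_0(1)g_1(b)=4^m$, whereas the statement predicts $b^{2m}g_0(b)g_1(1)=2\cdot 4^m$. What the recurrence actually yields is
\[
f^{(n)}_m=b^{mn}\prod_{k=1}^{n}g_{k-1}\bigl(b^{k-1}\bigr).
\]
This coincides with the printed version when $g_n\equiv g$ (the only case the paper uses, in Theorem~\ref{thm:main}) or when $b=1$, but not in general.

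Do not try to recover the printed ordering by composing dilation and Toeplitz operators. Done carefully, the paper's own argument lands on your formula. Write $\mathbf{F}_s(z):=\sum_{m\ge 0}f^{(s)}_m z^m$ and $\tilde g_s(z):=g_s(1/z)$, which is the paper-style symbol in powers of $z^{-1}$. On non-negative degrees the recurrence reads $\mathbf{F}_{s+1}(z)=\tilde g_s(bz)\mathbf{F}_s(bz)$, so
\[
\mathbf{F}_n(z)=\mathbf{F}_0(b^n z)\prod_{j=1}^{n}\tilde g_{n-j}(b^j z).
\]
The factor carrying dilation $b^j$ is $\tilde g_{n-j}$, because the most recent step is dilated least. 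Evaluating at the pole $z=b^{-n}$ gives $\tilde g_{n-j}(b^{j-n})=g_{n-j}(b^{n-j})$, which is exactly your product. The printed exponent $b^{n-k}$ comes from labelling the factor $g(\beta^m z)$ of the homogeneous product in Theorem~\ref{thm:main} as $g_{m-1}$ instead of $g_{t-m}$. That slip is invisible when $g_n\equiv g$.

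Your route differs from the paper's and suits this statement better. The paper, by repeating the proof of Theorem~\ref{thm:main}, passes to generating functions and uses Lemma~\ref{lemma:residue_sum} to trade the residue at $0$ for the simple pole at $b^{-n}$. Your ansatz explains why that single pole carries everything: $f^{(n)}_m$ stays geometric in $m$, so $\mathbf{F}_n(z)=c_n/(1-b^n z)$ for every $n$.

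Your argument also needs less. It only uses convergence of $\sum_k a^{(n)}_k b^{kn}$, since the constant $b^{mn}c_n$ simply factors out. By contrast, for non-polynomial entire $g_n$ the symbol $\tilde g_n$ has an essential singularity at $0$. Lemma~\ref{lemma:residue_sum}, which is stated for rational functions, then does not apply verbatim. The contour estimate and the Laurent-series justification of coefficient extraction would have to be redone by hand.
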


As mentioned in the introduction, the method presented in this work does not immediately generalize to the two dimensional case, since the generating function method requires writing out a recurrence relation over the clusters of a given size on the lattice.
Since such an enumeration of clusters is unavailable in two dimensions, this recursion cannot be written out in full generality.
In theory, we could use the method of Redelmeier \cite{redelmeierCountingPolyominoesAnother1981} to enumerate clusters up to size $n=24$, but this would not inform us about the medium- to long-term behavior of the monolayer which is the most important for applications.
However, we may instead view the one-dimensional result as an interpolation between the Polanyi-Wigner equation and the true two-dimensional behavior.
In future work, it will be interesting to try to fit experimental data to Eq. (\ref{eq:poisson_general}) and compare the values for $k_i$ with those predicted by the intermolecular kinetics.

\section*{Disclaimer}

This paper is an official contribution of the National Institute of Standards and Technology and is not subject to copyright in the United States.
Commercial products are identified in order to adequately specify certain procedures.
In no case does such identification imply recommendation or endorsement by the National Institute of Standards and Technology, nor does it imply that the identified products are necessarily the best available for the purpose.

\section*{Data Availability Statement}

We do not analyse or generate any datasets, because our work proceeds within a theoretical and mathematical approach.

\section*{Conflicts of Interest}

All authors declare no conflicts of interest in this paper.

\section*{Acknowledgments}

This work was performed with generous support from the CHIPS and Science Act NIST Grant 2.12.

\bibliography{ProbabilisticCellularAutomata}

\end{document}